\documentclass[11pt]{article}%
\usepackage{amsfonts}
\usepackage{amsmath}
\usepackage{amssymb}
\usepackage{graphicx}%
\setcounter{MaxMatrixCols}{30}
\providecommand{\U}[1]{\protect\rule{.1in}{.1in}}
\newtheorem{theorem}{Theorem}
\newtheorem{acknowledgement}[theorem]{Acknowledgement}

\newtheorem{corollary}[theorem]{Corollary}

\newtheorem{definition}[theorem]{Definition}

\newtheorem{proposition}[theorem]{Proposition}

\newenvironment{proof}[1][Proof]{\noindent\textbf{#1.} }{\ \rule{0.5em}{0.5em}}
\begin{document}

\title{Polar Duality and Quasi-States: :a Geometric Picture of Quantum Indeterminacy}
\author{Maurice de Gosson\thanks{maurice.de.gosson@univie.ac.at}\\Faculty of Mathematics, NuHAG\\University of Vienna\\and\\ARI - \ Austrian Academy of Sciences}
\maketitle

\begin{abstract}
The aim of this paper is to suggest a new interpretation \ of quantum
indeterminacy using the notion of polar duality from convex geometry.. i.e.
Our approach does not involve the usual variances and covariances, whose use
to describe quantum uncertainties has been questioned by Uffink and
Hilgevoord. We introduce the geometric notion of "quasi-states" (which we
could also have called Cheshire cat states) which are related in a way that
will be explained, to the notion of "quantum blob" we have introduced in
previous work. The consideration of the symmetries of the quasi-states leads
to the definition of the canonical group of a quasi state, which allows to
clssify them.

\end{abstract}

\section{Prologue and Introduction}

Mathematical points do not have any physical meaning: that I was already
taught a long time ago by a primary school teacher (whose name I have
unfortunately forgotten). However, mathematical points, serve as fundamental
elements in classical physics, facilitating a continuous description of
phenomena through analysis and geometry, which are deeply rooted in Newtonian
principles \cite{ICP}. These abstract entities underpin classical phase space,
crucial for understanding physical systems. However, transition to the quantum
realm introduces conceptual challenges due to Heisenberg's uncertainty
principle, rendering the notion of points obsolete as particles can no longer
be precisely localized in phase space. The uncertainty principle of quantum
mechanics (which we prefer to call the principle of indeterminacy) is
conventionally expressed in terms of variances and covariances of the involved
observables (for instance position and momentum). This characterization of
uncertainties is arbitrary and was already criticized by Hilgevoord and Uffink
\cite{hi,hiuf}, who showed variances and covariances are satisfactory
statistical measures of uncertainties only for Gaussian (or close to Gaussian)
distributions. Moreover, several authors, foremost Karl Popper (1967), have
contested the view that the Heisenberg uncertainty principle could be granted
the status of a true principle on the grounds that they are derivable from the
theory, whereas one cannot obtain the theory from the uncertainty relations.
The argument was that one can never derive any equation, say, the
Schr\"{o}dinger equation, or the commutation relation, from an in
inequality\footnote{We do not however totally agree with this statement; may
mathematical theories (e.g. functional spaces) are based on integral
inequalities.} (see the discussion in Stanford Encyclopedia article
\cite{Stanford}). These conceptual problems, together with the search for a
substitute of phase space in quantum mechanics has led us has led us to
suggest in \cite{gopolar,sym,FOOPMC} a new more general geometric expression
of the principle of indeterminacy using the notion of polar duality familiar
from convex geometry. It turns out that this reformulation allows to recover
the traditional Heisenberg and Robertson--Schr\"{o}dinger inequalities. This
motivates a posteriori our approach, which consists in introducing a kind of
\textquotedblleft quantum Cheshire cat state\textquotedblright, which we call
\textquotedblleft quasi state\textquotedblright. Traditionally, such states
refer to a situation where a quantum particle and one of its properties appear
to be separated in space. The term is derived from the Cheshire Cat in Lewis
Carroll's \textit{Alice's Adventures in Wonderland}, which can make its body
disappear while leaving its grin behind. In our contexts in the simplest case,
the \textquotedblleft grin is the product .$X\times X^{\hbar}$ where $X$ is a
compact subset of configuration space and $X^{\hbar}$ its polar dual in
momentum space. These products appear as characterizing minimum uncertainty
states in position and momentum,, and are generalized by applying symplectic
transformations to them. \ This will which is briefly described in Section
\ref{secpolar} where we relate it to the notion of quantum blob which is
central to the present paper. This led us to introduce and study
\cite{blob,physletta,golu09} the notion of \textquotedblleft quantum
blobs\textquotedblright\ as being the smallest phase space units of phase
space compatible with the uncertainty principle and having the symplectic
group as group of symmetries. Quantum blobs are in a bijective correspondence
with the squeezed coherent states from standard quantum mechanics, of which
they are a phase space picture.

\subsection*{Notation and terminology}

We collect here in a concise way the main mathematical objects that will be
used. General reference texts for the notions of symplectic geometry used here
see \cite{Silva,Birk}.

\paragraph{The symplectic phase space}

The cotangent bundle $T^{\ast}\mathbb{R}^{n}=\mathbb{R}^{n}\times
(\mathbb{R}^{n})^{\ast}$ where $\mathbb{R}^{n}$ is viewed as \textquotedblleft
configuration space\textquotedblright\ $\mathbb{R}_{x}^{n}$\ is denoted by
$\mathbb{R}_{x}^{n}\times\mathbb{R}_{p}^{n}$ \ and will be most of the time
identified with $\mathbb{R}^{2n}$. It will be equipped with the standard
symplectic form $\sigma=dp\wedge dx=d(pdx)$ where $pdx=p_{1}dx_{1}+\cdot
\cdot\cdot+p_{n}dx_{x}$ is the canonical (tautological) $1$-form on $T^{\ast
}\mathbb{R}^{n}$). In matrix form $\sigma(z,z^{\prime})=Jz\cdot z^{\prime}$
for $z,z^{\prime}\in\mathbb{R}^{2n}$ where $J=%
\begin{pmatrix}
0_{n\times n} & I_{n\times n}\\
-I_{n\times n} & 0_{n\times n}%
\end{pmatrix}
$ and $\cdot$ is the standard Euclidean scalar product on $\mathbb{R}^{2n}$.

\paragraph{Symplectic group}

The symplectic group \cite{Birk} $\operatorname*{Sp}(n)$ is the group of all
linear automorphisms of $\mathbb{R}^{2n}=T^{\ast}\mathbb{R}^{n}$ preserving
the canonical symplectic 2-form $\sigma=dp\wedge dx$. in coordinates:
$S\in\operatorname*{Sp}(n)$ if and only $S^{T}JS=SJS^{T}=J$. .The symplectic
group is generate by $J$ and the automorphisms $M_{L}:(x,p)\longmapsto
(L^{-1}x,L^{T}p)$ ($L\in GL(n,\mathbb{R})$) and $V_{-P}:(x,p)\longmapsto
(x,Px+p)$ ($P\in$,$\operatorname*{Sym}(n,\mathbb{R}))$, the additive group of
real symmetric n$\times n$ matrices).

\paragraph{Lagrangian Grassmannian}

We denote by $\operatorname*{Lag}(n)$ the Lagrangian Grassmannian of the
symplectic space $(\mathbb{R}^{2n},\sigma):$ we have $\ell\in
\operatorname*{Lag}(n)$ if and only if $\ell$ is a linear subspace of
$\mathbb{R}^{2n}$ such that $\dim\ell=n$ and the restriction of $\sigma$ to
$\ell$ is zero $\operatorname*{Sp}(n)$ acts transitively on
$\operatorname*{Lag}(n)$. We will use the notation $\ell_{X}=\mathbb{R}%
_{x}^{n}\times0$ and $\ell_{P}=0\times\mathbb{R}_{p}^{n}$ (\textquotedblleft
coordinate Lagrangian planes\textquotedblright).

\paragraph{Metaplectic group}

The metaplectic group $\operatorname*{Mp}(n)$ is the unitary representation of
the double cover of $\operatorname*{Sp}(n)$ in $L^{2}(\mathbb{R}^{n})$
\cite{Birk}. It is generated by the unitary operators $\widehat{J}$,
$\widehat{M}_{L,m}$, and $\widehat{V}_{-P}$ defined for $\psi\in
L^{2}(\mathbb{R}^{n})$ by%
\begin{gather}
\widehat{J}\psi(x)=\left(  \tfrac{1}{2\pi i\hbar}\right)  ^{n/2}%
\int_{\mathbb{R}^{n}}e^{-\frac{i}{\hbar}x\cdot x^{\prime}}\psi(x^{\prime
})dx^{\prime}.\label{mp1}\\
\widehat{M}_{L,m}\psi(x)=i^{m}\sqrt{|\det L|}\psi(Lx)\text{\ },\text{ \ }%
m\pi=\arg\det L,\label{mp2}\\
\widehat{V}_{P}\psi(x)=e^{-\frac{i}{2\hbar}Px\cdot x}\psi(x)\text{ }
\label{mp3}%
\end{gather}
and which cover $J,M_{L},V_{P}$, respectively. The integer $m$ (the
\textquotedblleft Maslov index\textquotedblright) is uniquely determined
modulo $4$.

\paragraph{The displacement operator}

The Heisenberg--Weyl displacement operator $\widehat{T}(z_{0})$ ($z_{0}%
=(x_{0},p_{0})$) is the unitary mapping $L^{2}(\mathbb{R}^{n})\longrightarrow
L^{2}(\mathbb{R}^{n})$ defined by
\[
\widehat{T}(z_{0})\psi(x)=e^{\frac{i}{\hbar}(p_{0}\cdot x-\frac{1}{2}%
p_{0}\cdot x_{0})}\psi(x-x_{0}).
\]
It satisfies the symplectic covariance property: if $\widehat{S}%
\in\operatorname*{Mp}(n)$ covers $S\in\operatorname*{Sp}(n($ then%
\begin{equation}
\widehat{S}\widehat{T}(z_{0})=\widehat{T}(Sz_{0})\widehat{S}. \label{Tcov}%
\end{equation}

\paragraph{The Wigner transform}

The Wigner transform of $\psi\in L^{2}(\mathbb{R}^{n})$ is defined by
\begin{equation}
W\psi(z)=(\pi\hbar)^{-n}(\widehat{T}(-z)\psi|R\widehat{T}(-z)\psi
)_{L^{2}(\mathbb{R}^{n})} \label{wig0}%
\end{equation}
where $R$ is the reflection operator: $R\psi(x)=\psi(-x)$; in explicit form%
\begin{equation}
W\psi(x,p)=\left(  \tfrac{1}{2\pi\hbar}\right)  ^{n}\int_{\mathbb{R}^{n}%
}e^{-\frac{i}{\hbar}p\cdot y}\psi(x+\tfrac{1}{2}y)\overline{\psi(x-\tfrac
{1}{2}y)}dy. \label{wig1}%
\end{equation}
It follows from (\ref{wig0}) and (\ref{Tcov}) that%
\begin{equation}
W(\widehat{S}\psi)(z)=W\psi(S^{-1}z) \label{Wcov}%
\end{equation}
(symplectic covariance of the Wigner transform).

\section{Polar Duality and Quasi States\label{secpolar}}

In previous work \cite{Bull,sym,FOOPMC} we have used the geometric notion of
polar duality (and generalizations thereof) to propose a general formulation
of the quantum indeterminacy; in \cite{IOP} we have applied our constructions
to Pauli's reconstruction problem.

\subsection{Polar duality}

Let $X\subset\mathbb{R}_{x}^{n}$ be a symmetric convex body, that is, $X=-X$
is compact\ and convex, and has non-empty interior. The $\hbar$-polar dual of
$X$ is, by definition,n the symmetric convex body%
\begin{equation}
X^{\hslash}=\{p\in\mathbb{R}^{n}:\sup\nolimits_{x\in X}(p\cdot x)\leq\hbar\}.
\label{xh}%
\end{equation}
The following reflexivity and anti-monotonicity properties of polar duality
are obvious:
\begin{equation}
(X^{\hslash})^{\hbar}=X\text{ \ },\text{ \ }\ X\subset Y\Longrightarrow
Y^{\hslash}\subset X^{\hslash} \label{monoref}%
\end{equation}
and, for all $L\in GL(n,\mathbb{R})$,
\begin{equation}
(LX)^{\hbar}=(L^{T})^{-1}X^{\hslash}. \label{scaling}%
\end{equation}
Let $B_{X}^{n}(\sqrt{\hbar})$ (\textit{resp}. $B_{P}^{n}(\sqrt{\hbar})$) be
the centered ball with radius $\sqrt{\hbar}$ in $\mathbb{R}_{x}^{n}$
(\textit{resp}. $\mathbb{R}_{p}^{n}$). We have the
\'{}%
self-duality property
\begin{equation}
(B_{X}^{n}(\sqrt{\hbar}))^{\hbar}=B_{P}^{n}(\sqrt{\hbar}). \label{balls}%
\end{equation}

Let us view the symmetric convex body $X\subset\mathbb{R}_{x}^{n}$ as
representing a cloud of position measurements on the quantum system under
investigation. In \cite{FOOPMC} we postulated that the polar dual $X^{\hslash
}$ consists of the set of outcomes of possible simultaneous momentum measurements.

Lagrangian planes are \ $n$-dimensional subspaces of $\mathbb{R}^{2n}$ on
which the symplectic form vanishes identically. In the case $n=1$ the
symplectic form is (up to the sign) the determinant function, hence the
Lagrangian planes in the phase plane are just the straight lines passing
through the origin. I the case \ of arbitrary $n$ Lagrangian planes are
typically the planes of coordinates $(x_{j\in\Lambda},p_{k\in\Lambda^{o}})$
where $\Lambda,\Lambda^{o}$ is a partition of \ $\{1,...,n\}$.

We extend the results above to more general situations. We recall \cite{Birk}
that the action of $\operatorname*{Sp}(n)$ on the Lagrangian Grassmannian
$\operatorname*{Lag}(n)$ is transitive, and so is its action on pairs of
transverse Lagrangian planes. In particular each $\ell\in\operatorname*{Lag}%
(n)$ ) is the image of $\ell_{X}=\mathbb{R}_{x}^{n}\times0$ (\textit{resp.}
$\ell_{P}=0\times\mathbb{R}_{p}^{n}$) by some $S\in\operatorname*{Sp}(n)$, and
each pair $(\ell,\ell^{\prime})$ with $\ell\cap\ell^{\prime}=0$ is the image
by some $S\in\operatorname*{Sp}(n)$ of the canonical pair $(\ell_{X},\ell
_{P})$.

\begin{definition}
A Lagrangian frame is the image of the canonical position-momentum frame
$(\ell_{X},\ell_{P})$ by some $S\in\operatorname*{Sp}(n)$: $(\ell,\ell
^{\prime})=S(\ell_{X},\ell_{P})$; equivalently it is a pair $(\ell
,\ell^{\prime})$ of transversal Lagrangian planes. The \ set of Lagrangian
frames in the symplectic space $(\mathbb{R}^{2n},\sigma)$ is denoted
$\mathcal{LF}(n)$ .
\end{definition}

We thus have a transitive action
\[
\operatorname*{Sp}(n)\times\mathcal{LF}(n)\longrightarrow\mathcal{LF}%
(n)\text{.}%
\]

Let $((\ell,\ell^{\prime})\in\mathcal{LF}(n)$ and $X_{\ell}\subset\ell$ be a
centrally symmetric convex body. Let $S\in\operatorname*{Sp}(n)$ be such that
$((\ell,\ell^{\prime})=S(\ell_{X},\ell_{P})$ and define $X\subset
\mathbb{R}_{x}^{n}$ (which we identify with $X\times0\subset\ell_{X}%
\subset\mathbb{R}_{x}^{n}\times0$) by $X_{\ell}=S(X\times0)$.

\begin{definition}
The symplectic polar dual of $X_{\ell}\subset\ell$ with respect to
$\ell^{\prime}$ is $X_{\ell^{\prime}}^{\hbar}=S(0\times X^{\hbar})$ where
$X^{\hbar}\subset\ell_{P}=0\times\mathbb{R}_{p}^{n}$ is the usual polar dual
of $X$.
\end{definition}

The definition of $X_{\ell^{\prime}}^{\hbar}$ does not depend on the choice of
$S$. In fact, if $((\ell,\ell^{\prime})=S^{\prime}(\ell_{X},\ell_{P})$ then
$S^{-1}S^{\prime}(\ell_{X},\ell_{P})=(\ell_{X},\ell_{P})$ so that we must have
$S^{\prime}=SM_{L}$ where $M_{L}=%
\begin{pmatrix}
L^{-1} & 0\\
0 & L^{T}%
\end{pmatrix}
$. Suppose that $S^{\prime}(X^{\prime}\times0)=S(X\times0)$, then
$M_{L}(X^{\prime}\times0)=X\times0)$ that is $(L^{-1}X^{\prime}\times
0)=X\times0)\Longleftrightarrow X^{\prime}=LX$ hence $(X^{\prime})^{\hbar
}=(L^{T})^{-1}X^{\hbar}$; it follows that
\begin{align*}
S^{\prime}(0\times(X^{\prime})^{\hbar})  &  =S^{\prime}(0\times(L^{T}%
)^{-1}X^{\hbar})\\
&  =S^{\prime}M_{L}^{-1}((0\times X^{\hbar})=S((0\times X^{\hbar}).
\end{align*}

The symplectic polar dual can be defined intrinsically:\ it is easy to verify
\cite{Bull} that\ $X_{\ell^{\prime}}^{\hbar}$ is the subset of $\ell^{\prime}$
consisting of all $z^{\prime}\in\ell^{\prime}$ such that
\begin{equation}
X_{\ell^{\prime}}^{\hbar}=\{z^{\prime}\in\ell^{\prime}:\sigma(z^{\prime
},z)\leq\hbar\text{ \ \textit{for all} \ }z\in X_{\ell}\}. \label{ozz}%
\end{equation}

\subsection{Quasi states: definition and first properties}

The considerations above suggest that we consider pairs $(X_{\ell}%
,X_{\ell^{\prime}}^{\hbar})$ representing in some way a quantum state. We will
call such a representation an \textquotedblleft quasi state\textquotedblright,
which has similarities (but also a different meaning) with the Cheshire cat
states appearing in the theory of weak measurements. \ In the latter case
there has bee criticism of the underlying idea that claiming separation of a
property from a particle is preposterous, (see the discussion in Duprey et al.
\cite{Duprey}). The following definition will be justified in a moment:

\begin{definition}
Let $(\ell,\ell^{\prime})=S^{\prime}(\ell_{X},\ell_{P})\in\mathcal{LF}(n)$ and
$X_{\ell}\subset\ell$ be a ellipsoid. The subset of the phase space
$\mathbb{R}^{2n}$ defined by
\[
X_{\ell}\times X_{\ell^{\prime}}^{\hbar}=S(X\times X^{\hbar})=S(X)\times
S(X^{\hbar})
\]
is called a minimum uncertainty quasi state.
\end{definition}

The key property of $X_{\ell}\times X_{\ell^{\prime}}^{\hbar}$, which will
allow us to connect this definition with the standard theory of minimum
uncertainty states expressed in terms of covariances is the following:

\begin{theorem}
Assume that $X$ is an ellipsoid: $X=L^{-1}(B_{X}^{n}(\sqrt{\hbar}))$ for some
$L\in GL(n,\mathbb{R})$. The John ellipsoid of the minimum uncertainty quasi
state $\ X_{\ell}\times X_{\ell^{\prime}}^{\hbar}$ is the quantum blob
$Q=SM_{L}(B^{2n}(\sqrt{\hbar}))$ and we have
\begin{equation}
\Pi_{\ell,\ell^{\prime}}Q=X_{\ell}\text{ \ , \ }\Pi_{\ell^{\prime}%
,\ell^{\prime}}Q=X_{\ell^{\prime}}^{\hbar} \label{proj1}%
\end{equation}
where \ $\Pi_{\ell,\ell^{\prime}}$ (resp. $\Pi_{\ell^{\prime},\ell^{\prime}}$)
is the projection on $\ell$ along $\ell^{\prime}$ (resp. onto $\ell^{\prime}$
along $\ell$).
\end{theorem}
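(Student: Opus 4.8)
The plan is to reduce everything to the canonical (unsqueezed) frame by the symplectic matrices $M_L$ and $S$, exploiting the fact that the maximal-volume inscribed ellipsoid is covariant under invertible linear maps. First I would write the two factors of the quasi state explicitly. Since $X=L^{-1}(B_X^n(\sqrt{\hbar}))$, applying the scaling rule (\ref{scaling}) with $L^{-1}$ in the role of $L$, together with the self-duality (\ref{balls}), gives $X^{\hbar}=L^T B_P^n(\sqrt{\hbar})$. Because $M_L$ acts by $(x,p)\mapsto(L^{-1}x,L^Tp)$ and therefore respects the product structure, this yields the key identity
\[
X\times X^{\hbar}=M_L\bigl(B_X^n(\sqrt{\hbar})\times B_P^n(\sqrt{\hbar})\bigr),
\]
so that $X_{\ell}\times X_{\ell'}^{\hbar}=S(X\times X^{\hbar})=SM_L\bigl(B_X^n(\sqrt{\hbar})\times B_P^n(\sqrt{\hbar})\bigr)$.

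The core geometric step is to show that the John ellipsoid of the product of two equal balls $B_X^n(\sqrt{\hbar})\times B_P^n(\sqrt{\hbar})$ is the phase-space ball $B^{2n}(\sqrt{\hbar})$. Containment $B^{2n}(\sqrt{\hbar})\subset B_X^n(\sqrt{\hbar})\times B_P^n(\sqrt{\hbar})$ is immediate from $|x|^2+|p|^2\le\hbar\Rightarrow|x|\le\sqrt{\hbar},\ |p|\le\sqrt{\hbar}$. To see it is the \emph{maximal} inscribed ellipsoid I would invoke John's uniqueness theorem: the maximal-volume inscribed ellipsoid is unique, hence invariant under every linear symmetry of the body. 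The product is invariant under $O(n)\times O(n)$ and under the swap $(x,p)\mapsto(p,x)$; an ellipsoid $\{z:\langle Qz,z\rangle\le1\}$ invariant under this group must have $Q$ scalar (off-diagonal blocks vanish and the diagonal blocks are forced equal and scalar), so it is a centred ball, whose radius is pinned to $\sqrt{\hbar}$ by the containment computation. This is the step I expect to be the main obstacle, since it is the only place where a genuine convex-geometric input (John's theorem plus the symmetry reduction) is needed; everything else is linear bookkeeping.

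Then I would transport this through the two linear maps. Any invertible linear map sends the maximal inscribed ellipsoid of a body to the maximal inscribed ellipsoid of its image, because inclusions are preserved and all volumes scale by the same factor $|\det|$. Applying $M_L$ and then $S$ therefore shows that the John ellipsoid of $X_{\ell}\times X_{\ell'}^{\hbar}$ is $SM_L(B^{2n}(\sqrt{\hbar}))=Q$, which is a quantum blob since $SM_L\in\operatorname{Sp}(n)$.

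Finally, for the projections (\ref{proj1}) I would again argue in the canonical frame first. Writing the oblique projections as $\Pi_{\ell,\ell'}=S\,\Pi_{\ell_X,\ell_P}\,S^{-1}$ and (onto $\ell'$ along $\ell$) $\Pi_{\ell',\ell}=S\,\Pi_{\ell_P,\ell_X}\,S^{-1}$, it suffices to compute the shadows of $M_L(B^{2n}(\sqrt{\hbar}))$ on the coordinate planes. Parametrising a point of this ellipsoid as $(Lu,(L^T)^{-1}v)^{-1}$-style, a point $u$ lies in the image under $(x,p)\mapsto(x,0)$ iff $|Lu|^2\le\hbar$ (minimising the $p$-part at $v=0$), so the shadow on $\ell_X$ is exactly $\{u:|Lu|\le\sqrt{\hbar}\}=L^{-1}(B_X^n(\sqrt{\hbar}))=X$; symmetrically the shadow on $\ell_P$ is $X^{\hbar}$. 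Applying $S$ and using $X_{\ell}=S(X\times0)$ and $X_{\ell'}^{\hbar}=S(0\times X^{\hbar})$ then gives $\Pi_{\ell,\ell'}Q=X_{\ell}$ and $\Pi_{\ell',\ell}Q=X_{\ell'}^{\hbar}$, completing the proof.
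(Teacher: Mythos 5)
Your proposal is correct and follows essentially the same route as the paper's proof: reduce to $X_{\ell}\times X_{\ell'}^{\hbar}=SM_{L}\bigl(B_{X}^{n}(\sqrt{\hbar})\times B_{P}^{n}(\sqrt{\hbar})\bigr)$ via the scaling rule (\ref{scaling}), identify the John ellipsoid of the product of balls as $B^{2n}(\sqrt{\hbar})$, transport it by affine covariance, and conjugate the oblique projections back to the coordinate case. The only difference is that you actually supply the symmetry-plus-uniqueness argument for the step the paper dismisses as ``straightforward to check'' with a citation to \cite{Bull}, which is a welcome addition rather than a deviation.
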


\begin{proof}
Recall \cite{Ball} that the John ellipsoid of a convex set is the (unique)
ellipsoid with maximum volume contained in that set. In view of formula
(\ref{scaling}) we We have $X^{\hbar}=L^{T}((B_{P}^{n}(\sqrt{\hbar}))$ hence
\[
X\times X^{\hbar}=M_{L}(B_{X}^{n}(\sqrt{\hbar})\times B_{P}^{n}(\sqrt{\hbar
}))
\]
so that
\[
X_{\ell}\times X_{\ell^{\prime}}^{\hbar}=S\left[  M_{L}(B_{X}^{n}(\sqrt{\hbar
})\times B_{P}^{n}(\sqrt{\hbar}))\right]
\]
and it suffices to prove that the John ellipsoid \ of $B_{X}^{n}(\sqrt{\hbar
})\times B_{P}^{n}(\sqrt{\hbar})$ is $B^{2n}(\sqrt{\hbar})$, which is
straightforward to check \cite{Bull}. The formulas (\ref{proj1}) are clearly
true when $S=I$, i.e. when $X_{\ell}\times X_{\ell^{\prime}}^{\hbar}=X\times
X^{\hbar}$ since in this case $Q$ is just the ball $B^{2n}(\sqrt{\hbar})$. The
general case follows from the fact that $\Pi_{\ell,\ell^{\prime}}=S\Pi
_{X}S^{-1}$ and $\Pi_{\ell^{\prime},\ell}=S\Pi_{P}S^{-1}$ where $\Pi_{X}$ and
$\Pi_{P}$ \ are the orthogonal projections onto $\ell_{X}$ and $\ell_{P}$ , respectively.
\end{proof}

\subsection{Mahler's volune as a measure of indeterminacy}

Let $X$ be a convex body in $\mathbb{R}_{x}^{n}$. We assume that $X$ contains
$0$ in its interior. By definition, the Mahler volume \cite{Mahler} of $X$ is
the product%
\begin{equation}
\upsilon(X)=\operatorname*{Vol}(X)\operatorname*{Vol}(X^{\hbar}%
)=\operatorname*{Vol}(X\times X^{\hbar}) \label{Mahler}%
\end{equation}
where $\operatorname*{Vol}$ is the usual Euclidean volume on $\mathbb{R}%
_{x}^{n}$ or \ $\mathbb{R}_{p}^{n}$. The Mahler volume is a dimensionless
quantity because we have $\upsilon(\lambda X)=\upsilon(X)$ for all $\lambda
>0$. More generally if $L$ is an automorphism of $\mathbb{R}_{x}^{n}$ then we
have%
\begin{equation}
\upsilon(LX)=\operatorname*{Vol}(LX)\operatorname*{Vol}((L^{T})^{-1}X^{\hbar
})=\upsilon(X). \label{mahlerinv}%
\end{equation}

The following upper bound for the Mahler volume \ (Blaschke--Santal\'{o}
inequality \cite{Blaschke,Santalo}) is well-known
\begin{equation}
\upsilon(X)\leq\operatorname*{Vol}B_{X}^{n}(\sqrt{\hbar})\operatorname*{Vol}%
B_{X}^{n}(\sqrt{\hbar})^{\hbar}|=\frac{(\pi\hbar)^{n}}{\Gamma(\frac{n}%
{2}+1)^{2}} \label{volxvolxh}%
\end{equation}
with equality if and only $X$ is an ellipsoid \ $LB_{X}^{n}(\sqrt{\hbar})$.
Since symplectic transformations are volume preserving it follows that that we
have the following bound for quasi states:%
\begin{equation}
\operatorname*{Vol}(X_{\ell}\times X_{\ell^{\prime}}^{\hbar})\leq\frac
{(\pi\hbar)^{n}}{\Gamma(\frac{n}{2}+1)^{2}} \label{BSquasi}%
\end{equation}
with again equality if and only if $X$ is an ellipsoid. For the question of
lover bounds for the Mahler volume it is known that (\cite{BM,Kuper})
$\upsilon(X)\geq\frac{(\pi\hbar)^{n}}{4^{n}n!}$which implies for quasi states
\begin{equation}
\operatorname*{Vol}(X_{\ell}\times X_{\ell^{\prime}}^{\hbar}).\geq\frac
{(\pi\hbar)^{n}}{4^{n}n!}. \label{kuper}%
\end{equation}
Actually, a famous conjecture, due to Mahler himself, is that $\upsilon
(X)\geq\frac{(4\hbar)^{n}}{n!}$ leading to
\begin{equation}
\operatorname*{Vol}(X_{\ell}\times X_{\ell^{\prime}}^{\hbar})\geq\frac
{(4\pi\hbar)^{n}}{n!}. \label{Mahlerquasi}%
\end{equation}
This lower bound is actually reached when $X$ is any $n$-parallelepiped%
\begin{equation}
X=[-\sqrt{2\Delta x_{1}^{2}},\sqrt{2\sigma\Delta x_{1}^{2}}]\times\cdot
\cdot\cdot\times\lbrack-[-\sqrt{2\Delta x_{n}^{2}},\sqrt{2\sigma\Delta
x_{n}^{2}}] \label{interval}%
\end{equation}
as is easily verified by a direct computation. Note that this situation
corresponds to the saturation of the Heisenberg inequalities, that is to a
tensor product of one-dimensional coherent states.

We interpret the Mahler volume of a quasi state as a measure of indeterminacy;
intuitively tis should be clear: for instance when the frame $(\ell
,\ell^{\prime})$ is the canonical one, for a given indeterminacy $X$ in
position , the larger \ $\operatorname*{Vol}(X_{\ell}\times X_{\ell^{\prime}%
}^{\hbar})$ is the larger is the indeterminacy in momenta. Suppose indeed that
$X_{\ell}\times P_{\ell^{\prime}}$ , $P_{\ell^{\prime}}\supset X_{\ell
^{\prime}}^{\hbar}$ is an arbitrary (that is, n not necessarily minimum
uncertainty) quasi state; then (\ref{Mahlerquasi}) say that we will have%
\[
\operatorname*{Vol}(X_{\ell}\times P)\geq\frac{(4\pi\hbar)^{n}}{n!}.
\]
The Blaschke--Santal\'{o} bound the appears as a limiting case occurring when
the quasi state becomes minimum uncertainty (Gaussian states for instance),
occurring when $P=$ $X_{\ell^{\prime}}^{\hbar})$. We conjecture that the
Mahler conjecture could be proven using quantum considerations; we will come
back to this possibility in an ongoing work.

\section{\ Topological Properties of Quasi States}

\subsection{Symplectic capacities}

Let $\operatorname*{Symp}(n)$ the group of all symplectomorphisms of
$(T^{\ast}\mathbb{R}^{n},\sigma)=(\mathbb{R}^{2n},\sigma)$: $\ f\in
\operatorname*{Symp}(n)$ if and only if $f$ is a diffeomorphism of
$\mathbb{R}^{2n}$ preserving the 2-form $\sigma$. Equivalently, the Jacobian
matrix $D_{z}f(z)$ \ of $f$ is in $\operatorname*{Sp}(n)$ for every
$z\in\mathbb{R}^{2n}$. \ Clearly
\[
\operatorname*{Sp}(n)\subset\operatorname*{ISp}(n)\subset\operatorname*{Symp}%
(n).
\]
A simple example of a non-linear symplectomorphism is given by the lift $f$ of
a diffeomorphism $\phi$ of $\mathbb{R}^{n}$ to $T^{\ast}\mathbb{R}^{n}$:
\[
f(z)=(\phi(x),(D\phi(x)^{-1})^{T}p).
\]
Symplectomorphisms not only preserve phase-space volumes, but they also
preserve the more subtle notion of symplectic capacity. A (normalized)
symplectic capacity \cite{Ekeland,HZ} is a function associating to every
subset $\Omega$ of the symplectic phase space $(\mathbb{R}^{2n},\sigma)$ a
number $c(\Omega)\in\lbrack0,\infty]$ and satisfying the following axioms:

\textit{Monotonicity}: If $\Omega\subset\Omega^{\prime}$ then $c(\Omega)\leq
c(\Omega^{\prime})$;

\textit{Conformality}: For every $\lambda\in\mathbb{R}$ we have $c(\lambda
\Omega)=\lambda^{2}c(\Omega)$;

\textit{Symplectic invariance}: $c(f(\Omega))=c(\Omega)$ for every
$f\in\operatorname*{Symp}(n)$;

\textit{Normalization}: We have, for $1\leq j\leq n$,
\begin{equation}
c(B^{2n}(R))=\pi R^{2}=c(Z_{j}^{2n}(R)) \label{cbz}%
\end{equation}
where $Z_{j}^{2n}(R)=\{z\in\mathbb{R}^{2n}:|x_{j}|^{2}+|p_{j}|^{2}\leq
R^{2}\}$.

The existence of symplectic capacities is guaranteed Gromov's \ non-squeezing
theorem which says that there exists $f\in\operatorname*{Symp}(n)$ such that
$f(B^{2n}(R))\subset Z_{j}^{2n}(r)$ if and only if $R\leq r$. It follows from
this deep result that the mappings $c_{\min}$ and $c_{\max}$ defined by
\begin{subequations}
\begin{align}
c_{\min}(\Omega)  &  =\sup_{f\in\operatorname*{Symp}(n)}\{\pi R^{2}%
:f(B^{2n}(R))\subset\Omega\}\label{cmin}\\
c_{\max}(\Omega)  &  =\inf_{f\in\operatorname*{Symp}(n)}\{\pi R^{2}%
:f(\Omega)\subset Z_{j}^{2n}(R) \label{cmax}%
\end{align}
are symplectic capacities ($c_{\min}$ is sometimes called \textquotedblleft
Gromov's width\textquotedblright) and that every symplectic capacity $c$
satisfies $c_{\min}$ $\leq c\leq c_{\max}$ (for surveys of symplectic
capacities from an elementary point of view see \cite{golu09,Birk,Birkbis}).

Symplectic capacities are generally notoriously difficult to calculate, even
numerically; \ see the project website \cite{ETH}. The case of ellipsoids
\end{subequations}
\[
\Omega=\{z\in\mathbb{R}^{2n}:Mz\cdot z\leq\hbar
\]
where $M\in\operatorname*{Sym}2n,\mathbb{R})$, $M>0$ is however well-known:
for every symplectic capacity $c$ we have
\begin{equation}
c(\Omega)=\frac{\pi\hbar}{\lambda_{\max}^{\sigma}} \label{capellipsoid}%
\end{equation}
where $\lambda_{\max}^{\sigma}$ is the largest symplectic eigenvalue of the
matrix $M$ (the symplectic eigenvalues of $M$ are the numbers $\lambda
_{j}^{\sigma}>0$ ($1\leq j\leq n$) such that the $\pm i\lambda_{j}^{\sigma}$
are the eigenvalues of $JM$). .A very useful symplectic capacity is that
defined by Hofer and Zehnder capacity \cite{HZ} which has the property that
when $\Omega$ is a compact convex set in $(\mathbb{R}^{2n},\sigma)$ with
smooth boundary $\partial\Omega$ then
\begin{equation}
c_{\mathrm{HZ}\ }(\Omega)=\int_{\gamma_{\min}}pdx=\sum_{j=1}^{n}\int%
_{\gamma_{\min}}p_{j}dx_{j} \label{HZ}%
\end{equation}
where $\gamma_{\min}$ is the shortest (positively oriented) Hamiltonian
periodic orbit carried by $\partial\Omega$.

\subsection{The symplectic capacity of a quasi state}

The following result generalizes the discussion of the Mahler volume using
symplectic capacities.

\begin{theorem}
Let $(X_{\ell}\times X_{\ell^{\prime}}^{\hbar})=S(X\times X^{\hbar})$,
$S\in\operatorname*{Sp}(n)$, be a pure quasi state. We have%
\begin{equation}
c_{\max}(X_{\ell}\times X_{\ell^{\prime}}^{\hbar})=4\hbar. \label{clh}%
\end{equation}
For a general quasi state \ $(X_{\ell}\times P_{\ell^{\prime}})$,
$X_{\ell^{\prime}}^{\hbar}\subset P_{\ell^{\prime}}$ we have
\begin{equation}
c_{\max}(X\times P)=4\lambda_{\max}\hbar\label{yaron1}%
\end{equation}
where $\lambda_{\max}\geq1$ is the number
\begin{equation}
\lambda_{\max}=\max\{\lambda>0:\lambda X^{\hbar}\subset P\}.
\end{equation}

\end{theorem}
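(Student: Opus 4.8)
The plan is to prove both displayed identities by sandwiching $c_{\max}$ between a squeezing (upper) bound and a closed-characteristic (lower) bound, after normalising by symplectic invariance. Since $S\in\operatorname{Sp}(n)\subset\operatorname{Symp}(n)$, the symplectic invariance axiom gives $c_{\max}(S(X\times X^{\hbar}))=c_{\max}(X\times X^{\hbar})$, and by the reduction recorded before the John-ellipsoid theorem, $X\times X^{\hbar}=M_{L}(B_{X}^{n}(\sqrt{\hbar})\times B_{P}^{n}(\sqrt{\hbar}))$ with $M_{L}\in\operatorname{Sp}(n)$. Hence for the pure state it suffices to compute $c_{\max}$ of the product of balls $P_{0}=B_{X}^{n}(\sqrt{\hbar})\times B_{P}^{n}(\sqrt{\hbar})$, which I claim equals $4\hbar$.

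For the upper bound I would use a suspension lemma: if $\operatorname{pr}_{1}\colon\mathbb{R}^{2n}\to\mathbb{R}^{2}$ denotes $(x,p)\mapsto(x_{1},p_{1})$, then $c_{\max}(\Omega)\leq c_{\max}(\operatorname{pr}_{1}(\Omega))$, the right-hand capacity being two-dimensional. Indeed any area-preserving $g$ of $\mathbb{R}^{2}$ lifts to the symplectomorphism $G=g\times\operatorname{id}$ of $\mathbb{R}^{2n}$ (it preserves $\sigma=(dp_{1}\wedge dx_{1})\oplus\sum_{j\geq2}dp_{j}\wedge dx_{j}$), and $\operatorname{pr}_{1}\circ G=g\circ\operatorname{pr}_{1}$, so $g(\operatorname{pr}_{1}(\Omega))\subset B^{2}(R)$ forces $G(\Omega)\subset Z_{1}^{2n}(R)$. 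Since $\operatorname{pr}_{1}(P_{0})$ is the square $[-\sqrt{\hbar},\sqrt{\hbar}]^{2}$ and in dimension two $c_{\max}$ of a convex body equals its area, this gives $c_{\max}(P_{0})\leq 4\hbar$. For a general state I would first apply $M_{L}$ to reduce $X$ to the ball $B^{n}(\sqrt{\hbar})$ (so $X^{\hbar}=B^{n}(\sqrt{\hbar})$ and $\lambda_{\max}=r_{\mathrm{in}}(P)/\sqrt{\hbar}$, where $r_{\mathrm{in}}$ is the inradius), then rotate by some $M_{R}$, $R\in O(n)$, so that the $p_{1}$-axis points in a minimal-width direction of $P$; the shadow $\operatorname{pr}_{1}$ is then the rectangle $[-\sqrt{\hbar},\sqrt{\hbar}]\times[-\min_{u}h_{P}(u),\min_{u}h_{P}(u)]$, $h_{P}$ the support function. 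Using the elementary identity $r_{\mathrm{in}}(P)=\min_{u}h_{P}(u)$ valid for symmetric convex bodies, this rectangle has area $4\sqrt{\hbar}\,r_{\mathrm{in}}(P)=4\lambda_{\max}\hbar$, giving $c_{\max}(X\times P)\leq 4\lambda_{\max}\hbar$.

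For the lower bound I would use $c_{\mathrm{HZ}}\leq c_{\max}$ together with the fact that $c_{\mathrm{HZ}}$ on a convex body is the minimal action of a closed characteristic on its boundary. On $\partial P_{0}$ the trajectory lying in a symplectic coordinate plane $\{x_{j}=p_{j}=0,\ j\geq2\}$ traces the boundary square of area $4\hbar$ and is a (generalised) closed characteristic, so $\oint p_{1}\,dx_{1}=4\hbar$; the essential input is that this diameter-bouncing billiard is the \emph{shortest} one, which is exactly the Artstein-Avidan--Karasev--Ostrover identity $c_{\mathrm{EHZ}}(K\times K^{\circ})=4$ (here $K=B^{n}(\sqrt{\hbar})$, rescaled by conformality). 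This yields $c_{\max}(P_{0})\geq 4\hbar$ and hence $c_{\max}(P_{0})=4\hbar$. For a general state the lower bound is purely formal: since $\lambda_{\max}X^{\hbar}\subset P$, monotonicity gives $c_{\max}(X\times P)\geq c_{\max}(X\times\lambda_{\max}X^{\hbar})$, and writing the anisotropic dilation $(x,p)\mapsto(x,\lambda_{\max}p)$ as $\sqrt{\lambda_{\max}}$ times the symplectic map $M_{\sqrt{\lambda_{\max}}\,I}$ and invoking conformality and symplectic invariance gives $c_{\max}(X\times\lambda_{\max}X^{\hbar})=\lambda_{\max}c_{\max}(X\times X^{\hbar})=4\lambda_{\max}\hbar$.

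The main obstacle is the pure-state lower bound, i.e.\ the \emph{minimality} of the diameter-bouncing characteristic on $\partial(B^{n}(\sqrt{\hbar})\times B^{n}(\sqrt{\hbar}))$; unlike the other steps it is not elementary and rests on the variational (Ekeland--Hofer--Zehnder) description of the capacity of convex domains, equivalently on the billiard identity $c_{\mathrm{EHZ}}(K\times K^{\circ})=4$. A secondary point to check carefully is that the suspension upper bound really matches $\lambda_{\max}$, which hinges on the support-function identity $r_{\mathrm{in}}(P)=\min_{u}h_{P}(u)$ for symmetric bodies. Once these two facts are in place the four inequalities close up to give (\ref{clh}) and (\ref{yaron1}).
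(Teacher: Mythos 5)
Your overall strategy---normalise by symplectic invariance, then sandwich $c_{\max}$ between a projection (suspension) upper bound and a closed-characteristic lower bound---is a legitimate and more informative route than the paper's, which simply quotes the Artstein-Avidan--Karasev--Ostrover identities $c_{\max}(X\times X^{1})=4$ and $c_{\max}(X\times P)=4\lambda_{\max}$ and restores the $\hbar$ by conformality under $M_{\hbar^{1/2}I}$. Your lower bounds are sound: the reduction of the general quasi state to the pure one via monotonicity plus the dilation $(x,p)\mapsto(x,\lambda_{\max}p)=\sqrt{\lambda_{\max}}\,M_{\sqrt{\lambda_{\max}}I}$ is exactly right, and you correctly flag that the minimality of the bouncing characteristic on $\partial\bigl(B^{n}(\sqrt{\hbar})\times B^{n}(\sqrt{\hbar})\bigr)$ is not elementary but is precisely the AKO theorem $c_{\mathrm{EHZ}}(K\times K^{\circ})=4$. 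So in the end you rely on the same external input the paper does, only unpacked and localised to one step.

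The genuine gap is in the upper bound: the identity $X\times X^{\hbar}=M_{L}\bigl(B_{X}^{n}(\sqrt{\hbar})\times B_{P}^{n}(\sqrt{\hbar})\bigr)$ holds only when $X$ is an ellipsoid $L^{-1}B_{X}^{n}(\sqrt{\hbar})$, whereas the theorem (as the paper's own proof makes explicit by separating ``the general case'' from the ellipsoid case treated in \cite{gopolar}) is asserted for an arbitrary symmetric convex body $X$. For non-ellipsoidal $X$ your suspension bound applied to $X\times X^{\hbar}$ along a direction $u$ yields a rectangle of area $4\hbar\,h_{X}(u)/\rho_{X}(u)$, where $\rho_{X}(u)=\max\{t>0:tu\in X\}$ is the radial function; this exceeds $4\hbar$ for generic $u$, so the bound does not close as written. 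The fix is to choose $u$ to be the direction of the point of $\partial X$ nearest the origin, where the supporting hyperplane is orthogonal to $u$ and hence $h_{X}(u)=\rho_{X}(u)=r_{\mathrm{in}}(X)$; after a rotation $M_{R}$, $R\in O(n)$, the shadow then has area exactly $4\hbar$. The same care is needed for $X\times P$, where reducing $X$ to a ball by $M_{L}$ is likewise unavailable outside the ellipsoid case. A secondary point you should state as the input it is: the square shadow argument needs that in two dimensions the cylindrical capacity of a compact convex set equals its area (an area-preserving embedding fact of Moser/Riemann-mapping type); it is true, but it is doing real work in your upper bound.
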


\begin{proof}
For a detailed proof when $X$ is an ellipsoid see \cite{gopolar}; also Prop. 3
in \cite{ACHAPOLAR}). In the general case one has to use results Ii
\cite{Artstein} Artstein-Avidan \textit{et al}. show that for $\hbar=1$ we
have $c_{\max}(X\times X^{1})=4$. Using the obvious relation $X=\hbar X^{1}$
we have, by the conformality property of symplectic capacities,%
\begin{align*}
c_{\max}(X\times X^{\hbar})  &  =c_{\max}(\hbar^{1/2}(\hbar^{-1/2}X\times
\hbar^{1/2}X^{1}))\\
&  =\hbar c_{\max}(\hbar^{-1/2}X\times\hbar^{1/2}X^{1})\\
&  =\hbar c_{\max}(X\times X^{\hbar})
\end{align*}
the last equality because $\hbar^{-1/2}X\times\hbar^{1/2}X^{1}=M_{\hbar^{1/2}%
}(X,X^{1})$ with $M_{\hbar^{1/2}}\in\operatorname*{Sp}(n)$; hence $c_{\max
}(X\times X^{\hbar})=$ $4\hbar$. Formula (\ref{clh}) follows since by the
symplectic invariance of symplectic capacities we have
\[
c_{\max}(X_{\ell}\times X_{\ell^{\prime}}^{\hbar})=c_{\max}((X\times X^{\hbar
}))=c_{\max}(X\times X^{\hbar}).
\]
A similar argument using the formula $c_{\max}(X\times P)=4\lambda_{\max}$ in
\cite{Artstein} leads to (\ref{yaron1}).
\end{proof}

\subsection{The covariance matrix and quantum blobs}

Let us begin by studying how our quasi states are related to her usual
Heisenberg inequality and its generalizations. Let $|\psi\rangle$ be a quantum
taste, we assume \ that $\psi$ has first and second momenta so the state has a
center and covariance matrix $\Sigma$; \ we denote by $\Omega_{\Sigma}$ the
corresponding covariance ellipsoid:%
\[
\Sigma=\{z:\frac{1}{2}\Sigma^{-1}z\cdot z\leq1\}.
\]
We will use the block-matrix notation%
\begin{equation}
\Sigma=%
\begin{pmatrix}
\Sigma_{XX} & \Sigma_{XP}\\
\Sigma_{PX} & \Sigma_{PP}%
\end{pmatrix}
\text{ \ },\text{ \ }\Sigma_{PX}=\Sigma_{XP}^{T} \label{defcovma}%
\end{equation}
with $\Sigma_{XX}=(\Delta(x_{j},x_{k}))_{1\leq j,k\leq n}$, \ $\Delta
(x_{j},x_{j})=\Delta x_{j},^{2}$, and. so on. The following result
\ \cite{dutta,Birk,Birkbis} is well-known: the covariance matrix of a quantum
state (pure or mixed= satisfies%
\begin{equation}
\Sigma+\frac{i\hbar}{2}J\text{ \emph{is semi-definite positive}}\emph{.}
\label{Quantum}%
\end{equation}
Condition (\ref{Quantum}), which we will write for short $\Sigma+\frac{i\hbar
}{2}J\geq0$, is also necessary for a density operator $\widehat{\rho}$ to to
be positive semi-definite (and hence represent a mixed quantum state)
\ \emph{but it \ is not sufficient e}xcept in the Gaussian case, see
\cite{Birkbis,QHA} and the references therein. \ Condition (\ref{Quantum}) is
an algebraic way to represent the Robertson--Schr\"{o}dinger inequalities for
the (co)variances:%
\begin{equation}
\Delta(x_{j},,x_{j})\Delta(p_{j},p_{j})\geq\Delta(x_{j},p_{j})^{2}+\tfrac
{1}{4}\hbar^{2}\text{ \ },\text{ \ }1\leq j\leq n. \label{RS}%
\end{equation}
It turns out that we have showed in \cite{FOOP,golu09,Birkbis} that

\begin{proposition}
The condition $\Sigma+\frac{i\hbar}{2}J\geq0$ is equivalent to the following:
the covariance ellipsoid
\[
\Omega_{\Sigma}=\{z:\tfrac{1}{2}\Sigma^{-1}z\cdot z\leq1\}
\]
contains a quantum blob.
\end{proposition}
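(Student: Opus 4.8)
The plan is to convert both sides of the claimed equivalence into a single invariant, the symplectic eigenvalues of $\Sigma$, and then close the loop using the symplectic capacity. First I would apply Williamson's symplectic diagonalization theorem: since $\Sigma$ is real, symmetric and positive definite, there is $S_0\in\operatorname{Sp}(n)$ such that $\Sigma=S_0 D S_0^{T}$ with $D=\operatorname{diag}(\Lambda,\Lambda)$ and $\Lambda=\operatorname{diag}(\lambda_1,\dots,\lambda_n)$, the $\lambda_j>0$ being the symplectic eigenvalues of $\Sigma$. Because $S_0$ is symplectic it leaves $J$ invariant under congruence, so $\Sigma+\tfrac{i\hbar}{2}J\ge 0$ holds if and only if $D+\tfrac{i\hbar}{2}J\ge 0$. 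Grouping the coordinates into the conjugate pairs $(x_j,p_j)$, the latter matrix is block-diagonal with $2\times 2$ Hermitian blocks $\bigl(\begin{smallmatrix}\lambda_j & i\hbar/2\\ -i\hbar/2 & \lambda_j\end{smallmatrix}\bigr)$, each of which is positive semi-definite exactly when $\lambda_j\ge\hbar/2$. Hence the algebraic condition is equivalent to $\lambda_{\min}\ge\hbar/2$, where $\lambda_{\min}=\min_j\lambda_j$.

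Next I would express the geometric side through the same quantity. The congruence $\Sigma=S_0 D S_0^{T}$ gives $\Omega_\Sigma=S_0(\Omega_D)$, where $\Omega_D=\{z:\sum_j (x_j^{2}+p_j^{2})/(2\lambda_j)\le 1\}$ is an ellipsoid in symplectic normal form with semi-axis $\sqrt{2\lambda_j}$ in the $j$-th conjugate plane. Applying the capacity formula (\ref{capellipsoid}) for ellipsoids, together with the symplectic invariance of $c$, yields $c(\Omega_\Sigma)=c(\Omega_D)=2\pi\lambda_{\min}$ (the smallest symplectic area). On the other hand every quantum blob $Q=S_1(B^{2n}(\sqrt{\hbar}))$, $S_1\in\operatorname{Sp}(n)$, has $c(Q)=c(B^{2n}(\sqrt{\hbar}))=\pi\hbar$ by symplectic invariance and the normalization axiom (\ref{cbz}).

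These two computations close the equivalence. For the implication ``$\Omega_\Sigma$ contains a blob $\Rightarrow$ condition'', monotonicity of $c$ gives $\pi\hbar=c(Q)\le c(\Omega_\Sigma)=2\pi\lambda_{\min}$, whence $\lambda_{\min}\ge\hbar/2$, which is the condition. For the converse I would argue constructively: if $\lambda_j\ge\hbar/2$ for every $j$ then $1/(2\lambda_j)\le 1/\hbar$, so for any $z$ with $\sum_j(x_j^{2}+p_j^{2})\le\hbar$ one has $\sum_j(x_j^{2}+p_j^{2})/(2\lambda_j)\le 1$; that is, $B^{2n}(\sqrt{\hbar})\subset\Omega_D$. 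Applying $S_0$ gives $S_0(B^{2n}(\sqrt{\hbar}))\subset\Omega_\Sigma$, and the left-hand side is by definition a quantum blob.

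The step I expect to be the crux is the forward implication, namely extracting the spectral bound $\lambda_{\min}\ge\hbar/2$ from the mere inclusion of a blob. This cannot be done with Euclidean volume, since a ball of capacity $\pi\hbar$ can be squeezed to arbitrarily small volume, and it genuinely requires a symplectic capacity, hence ultimately Gromov's non-squeezing theorem underlying (\ref{capellipsoid}). The remaining ingredients --- the Williamson reduction, the $2\times2$ block positivity, and the constructive converse --- are routine once $c(\Omega_\Sigma)$ has been identified with $2\pi\lambda_{\min}$.
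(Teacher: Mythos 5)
Your proof is correct and complete. Note that the paper itself does not prove this proposition (it only cites \cite{FOOP,golu09,Birkbis}), and your argument --- Williamson diagonalization reducing both sides to the single condition $\lambda_{\min}\geq\hbar/2$ on the smallest symplectic eigenvalue of $\Sigma$, with the ellipsoid capacity formula (\ref{capellipsoid}) plus monotonicity giving the forward implication and the explicit inclusion $B^{2n}(\sqrt{\hbar})\subset\Omega_{D}$ giving the converse --- is precisely the standard one found in those references, including the correct observation that the forward direction genuinely needs a symplectic invariant (ultimately Gromov's theorem) rather than volume.
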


Notice that this result in particular implies that the symplectic capacity
cannot be arbitrarily small:

\begin{corollary}
(i) The covariance ellipsoid of a granum state $\Omega_{\Sigma}$ satisfies
$c(\Omega_{\Sigma})\geq\pi\hslash$ for every symplectic capacity $C(ii)$ (ii)
; the shortest periodic Hamiltonian orbit $\gamma_{\min}$ carried by satisfies
$\Omega_{\Sigma}$ satisfies
\begin{equation}
\int_{\gamma_{\min}}pdx\geq\frac{1}{2}h. \label{periodo}%
\end{equation}

\end{corollary}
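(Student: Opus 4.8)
The plan is to read off both statements as immediate consequences of the preceding Proposition together with the four defining axioms of a symplectic capacity. Note first that for the covariance matrix of any (pure or mixed) quantum state the condition $\Sigma+\tfrac{i\hbar}{2}J\geq0$ holds by (\ref{Quantum}), so the hypothesis of the Proposition is automatically met and $\Omega_{\Sigma}$ contains a quantum blob $Q$. Recalling that a quantum blob is by definition of the form $Q=S(B^{2n}(\sqrt{\hbar}))$ (up to a translation) with $S\in\operatorname*{Sp}(n)$, I would first evaluate $c(Q)$ for an arbitrary symplectic capacity $c$: by symplectic invariance $c(Q)=c(B^{2n}(\sqrt{\hbar}))$, and by the normalization axiom (\ref{cbz}) the right-hand side equals $\pi(\sqrt{\hbar})^{2}=\pi\hbar$. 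Since $Q\subset\Omega_{\Sigma}$, monotonicity then gives $c(\Omega_{\Sigma})\geq c(Q)=\pi\hbar$, which is exactly assertion (i).

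For (ii) I would specialize to the Hofer--Zehnder capacity $c_{\mathrm{HZ}}$. Because $\Omega_{\Sigma}$ is an ellipsoid it is a compact convex set with smooth boundary, so formula (\ref{HZ}) applies and yields $c_{\mathrm{HZ}}(\Omega_{\Sigma})=\int_{\gamma_{\min}}p\,dx$, where $\gamma_{\min}$ is the shortest closed Hamiltonian characteristic carried by $\partial\Omega_{\Sigma}$. Since $c_{\mathrm{HZ}}$ is itself a symplectic capacity, applying part (i) to it gives $\int_{\gamma_{\min}}p\,dx=c_{\mathrm{HZ}}(\Omega_{\Sigma})\geq\pi\hbar$. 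Finally, using $h=2\pi\hbar$ to rewrite $\pi\hbar=\tfrac{1}{2}h$ delivers the stated inequality (\ref{periodo}).

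The argument is essentially bookkeeping once the Proposition is in hand, so the only point requiring care — and the closest thing to an obstacle — is to make sure the three axioms are used in their full generality: monotonicity to pass from $Q$ up to $\Omega_{\Sigma}$, symplectic invariance to remove the symplectic matrix $S$ (and any translation, a translation being a symplectomorphism), and normalization (\ref{cbz}) to evaluate $c(B^{2n}(\sqrt{\hbar}))$ for \emph{every} normalized capacity rather than merely for $c_{\min}$ or $c_{\max}$. I expect no genuine difficulty here; all of the nontrivial content is carried by the Proposition, whose substantive direction — that $\Sigma+\tfrac{i\hbar}{2}J\geq0$ forces a quantum blob to sit inside $\Omega_{\Sigma}$ — is precisely what turns the bound $c(\Omega_{\Sigma})\geq\pi\hbar$ (equivalently, Gromov width at least $\pi\hbar$) into a genuine quantum constraint rather than a triviality.
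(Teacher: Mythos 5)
Your proof is correct and follows essentially the same route as the paper: invoke the Proposition to place a quantum blob inside $\Omega_{\Sigma}$, use symplectic invariance, normalization and monotonicity to get $c(\Omega_{\Sigma})\geq\pi\hbar$, then specialize to $c_{\mathrm{HZ}}$ and apply (\ref{HZ}) with $\pi\hbar=\tfrac{1}{2}h$. Your write-up is in fact slightly more explicit than the paper's about which axioms are used where.
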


\begin{proof}
(i) Since $\Omega_{\Sigma}$ contains a quantum blob $Q_{S}=S(B^{2n}%
(\sqrt{\hbar}))$, \ we have, by the monotonicity property of symplectic
capacities,
\[
c(\Omega_{\Sigma})\geq(cQ_{S})=c(B^{2n}(\sqrt{\hbar}))=\pi\hbar.
\]
(ii) Since the inequality $c(\Omega_{\Sigma})\geq\pi\hslash$ hold for every
symplectic capacity, it holds for the Hofer--Zehnder capacity $c_{\mathrm{HZ}%
\ }.$ Formula (\ref{periodo}) follows from (\ref{HZ}) since $\Omega_{\Sigma}$
is convex.
\end{proof}

\section{ Gaussian Quasi States}

We now focus on the case where $X$ (and hence $X^{\hbar})$ is an ellipsoid.
This leads us to the case where the involved quasp states are Gaussian, in a
sense that will be explained.

\subsection{Quantum blobs and pre-Iwasawa factorization}

We introduced the notion of quantum blob is in \cite{blob,physletta,golu09} as
being a symplectic ball with radius $\sqrt{\hbar}$ that is an ellipsoid of
$\mathbb{R}^{2n}$ of the type
\begin{equation}
Q_{S}(z_{0})=T(z_{0})S(B^{2n}(\sqrt{\hbar}))=S(B^{2n}(S^{-1}z_{0},\sqrt{\hbar
}) \label{blob}%
\end{equation}
where $S\in\operatorname*{Sp}(n)$, $T(z_{0}):z\longmapsto z+z_{0}$, and
$B^{2n}(z_{0}\sqrt{\hbar})$ \ is the ball in $\mathbb{R}^{2n}$ with radius
$\sqrt{\hbar}$ centered at $z_{0}$. \ We will write for short $Q_{S}(0)=Q_{S}%
$. The inhomogeneous symplectic group $\operatorname{ISp}%
(n)=\operatorname*{Sp}(n)\rtimes\mathbb{R}^{2n}$ obviously acts transitively
on the set $\operatorname*{Blob}(n)$ of all quantum blobs:%
\begin{gather*}
\operatorname{ISp}(n)\times\operatorname*{Blob}(n)\longrightarrow
\operatorname*{Blob}(n)\\
(S^{\prime}T(z^{\prime}),Q_{S}(z))\longmapsto Q_{S^{\prime}S}(S^{\prime
}(z^{\prime}+z)).
\end{gather*}

Recall the pre-Iwasawa decomposition of a symplectic matrix\cite{dutta,iwa}:
there exist unique matrices $P,L\in\operatorname*{Sym}(n,\mathbb{R})$, $L>0$,
and $R\in\operatorname*{Sp}(n)\cap O(2n)$ such that
\begin{equation}
S=V_{P}M_{L}R. \label{iwa1}%
\end{equation}
These matrices are explicitly given by the formulas%
\begin{gather}
L=(AA^{T}+BB^{T})^{-1/2}\label{pl1}\\
P=-(CA^{T}+DB^{T})(AA^{T}+BB^{T})^{-1}. \label{pl2}%
\end{gather}
The matrix $R$ is a symplectic rotation: writing
\[
R=%
\begin{pmatrix}
E & F\\
-F & E
\end{pmatrix}
\in\operatorname*{Sp}(n)\cap O(2n,\mathbb{R})
\]
the $n\times n$ blocks $E$ and $F$ are given by%
\begin{equation}
E=(AA^{T}+BB^{T})^{-1/2}A\text{ \ },\text{ \ }F=(AA^{T}+BB^{T})^{-1/2}B.
\label{unixy}%
\end{equation}

\begin{proposition}
\label{Prop1}Every quantum blob $Q_{S}(z)$ can be written (in a unique way)
\begin{equation}
Q_{S}(z)=T(z)V_{P}M_{L}(B^{2n}(\sqrt{\hbar}))=T(z)M_{L}V_{LPL^{-1}}%
(B^{2n}(\sqrt{\hbar})) \label{mele}%
\end{equation}
for some $P,L\in\operatorname*{Sym}(n,\mathbb{R})$, $L>0$, where \ $P$ and $L$
correspond to the pre-Iwasawa factorization of $S$.
\end{proposition}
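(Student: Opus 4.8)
The plan is to read the factorization straight off the pre-Iwasawa decomposition (\ref{iwa1}), the whole geometric content being that the orthogonal factor leaves the centered ball invariant. First I would write $S=V_PM_LR$ with $P,L\in\operatorname*{Sym}(n,\mathbb{R})$, $L>0$, and $R\in\operatorname*{Sp}(n)\cap O(2n)$ the symplectic rotation given by (\ref{iwa1})--(\ref{unixy}). Since $R$ is orthogonal it preserves every Euclidean ball centered at the origin, so $R(B^{2n}(\sqrt{\hbar}))=B^{2n}(\sqrt{\hbar})$. Substituting into the defining formula $Q_S(z)=T(z)S(B^{2n}(\sqrt{\hbar}))$ from (\ref{blob}) gives
\[
Q_S(z)=T(z)V_PM_LR(B^{2n}(\sqrt{\hbar}))=T(z)V_PM_L(B^{2n}(\sqrt{\hbar})),
\]
which is the first claimed identity.

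For the second identity I would establish the intertwining relation between the shear $V_P$ and $M_L$ by a one-line block-matrix conjugation. Using $V_P=\left(\begin{smallmatrix} I & 0\\ -P & I\end{smallmatrix}\right)$ and $M_L=\left(\begin{smallmatrix} L^{-1} & 0\\ 0 & L\end{smallmatrix}\right)$ (recall $L^{T}=L$) one finds
\[
M_L^{-1}V_PM_L=\begin{pmatrix} I & 0\\ -L^{-1}PL^{-1} & I\end{pmatrix}=V_{L^{-1}PL^{-1}},
\]
so that $V_PM_L=M_LV_{L^{-1}PL^{-1}}$; applying this to the ball and prepending $T(z)$ yields the second form. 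The parameter $L^{-1}PL^{-1}$ is symmetric (as $L^{-1}$ and $P$ are), so it is an admissible shear parameter.

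It remains to prove uniqueness. The center of $Q_S(z)$ is $z$, because $V_PM_L$ is linear and hence $V_PM_L(B^{2n}(\sqrt{\hbar}))$ is a centrally symmetric ellipsoid centered at the origin; this pins down $z$. For $(P,L)$, suppose $V_PM_L(B^{2n}(\sqrt{\hbar}))=V_{P'}M_{L'}(B^{2n}(\sqrt{\hbar}))$. Then $R':=M_{L'}^{-1}V_{P'}^{-1}V_PM_L$ is symplectic and preserves the centered ball, hence $R'\in\operatorname*{Sp}(n)\cap O(2n)$ (a linear map preserving a centered Euclidean ball is orthogonal). This exhibits $V_PM_L=V_{P'}M_{L'}R'$ as a pre-Iwasawa factorization of the symplectic matrix $V_PM_L$, whose trivial such factorization is $V_PM_L\cdot I$; the uniqueness asserted in (\ref{iwa1}) then forces $P'=P$, $L'=L$, and $R'=I$.

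The one genuinely delicate point is this last step: everything hinges on the stabilizer of $B^{2n}(\sqrt{\hbar})$ inside $\operatorname*{Sp}(n)$ being exactly $\operatorname*{Sp}(n)\cap O(2n)$, so that the residual ambiguity in the $(V_P,M_L)$ data is precisely the orthogonal ambiguity already quotiented out by the pre-Iwasawa decomposition. Once that identification is secured, the remainder is bookkeeping: substituting (\ref{iwa1}) and performing the conjugation above. The essential geometric input throughout is the invariance $R(B^{2n}(\sqrt{\hbar}))=B^{2n}(\sqrt{\hbar})$ under symplectic rotations.
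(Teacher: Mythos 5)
Your proof is correct and follows essentially the same route as the paper's (one-line) proof: substitute the pre-Iwasawa factorization $S=V_PM_LR$ into (\ref{blob}), use $R(B^{2n}(\sqrt{\hbar}))=B^{2n}(\sqrt{\hbar})$ for the symplectic rotation $R$, and then conjugate the shear past $M_L$; you merely spell out the uniqueness step (via the stabilizer of the centered ball being $\operatorname{Sp}(n)\cap O(2n)$ together with the uniqueness of the pre-Iwasawa factors) which the paper leaves implicit. One remark: your block computation yields $V_PM_L=M_LV_{L^{-1}PL^{-1}}$, whereas (\ref{mele}) prints $V_{LPL^{-1}}$; your subscript is the correct one with the paper's conventions (and is symmetric, as a shear parameter must be), so the exponent in the displayed formula is a typo rather than a flaw in your argument.
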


\begin{proof}
Obvious since $R((B^{2n}(\sqrt{\hbar})=B^{2n}(\sqrt{\hbar})$. The relation
$V_{P}M_{L}=M_{L}V_{LPL^{-1}}$ is trivial since $L=L^{T}$.
\end{proof}

The phase space ball $B^{2n}(\sqrt{\hbar})$ is the simplest example of a
quantum blob. We have just seen that $R(B^{2n}(\sqrt{\hbar}))=B^{2n}%
(\sqrt{\hbar})$ for every $R\in\operatorname*{Sp}(n)\cap O(2n,\mathbb{R})$.
Assume conversely that $S(B^{2n}(\sqrt{\hbar})=B^{2n}(\sqrt{\hbar}$ for some
$S\in\operatorname*{Sp}(n)$. Then, using the pre-Iwasawa factorization of $S$
this condition is equivalent to $V_{P}M_{L}(B^{2n}(\sqrt{\hbar}))=B^{2n}%
(\sqrt{\hbar})$, which is only possible if $P=0$ and $L=I$. \ 

\subsection{Generalized coherent states}

We consider generalized Gaussians (\textquotedblleft coherent
states\textquotedblright) of the type
\begin{equation}
\psi_{X,Y}(x)=\left(  \tfrac{\det X}{(\pi\hbar)^{n}}\right)  ^{1/4}%
e^{-\frac{1}{2\hbar}(X+iY)x\cdot x} \label{squeezed}%
\end{equation}
where $X,Y\in\operatorname*{Sym}(n,\mathbb{R})$ ,and $X$ is positive definite
($X>0$). These Gaussians can be obtained from the standard coherent state
\begin{equation}
\phi_{0}^{\hbar}(x)=\psi_{I,0}(x)=(\pi\hbar)^{-n/4}e^{-|x|^{2}/2\hbar}
\label{standard}%
\end{equation}
using elementary metaplectic operators, as follows from the obvious formula
\begin{equation}
\psi_{X,Y}=\widehat{S}_{X,Y}\phi_{0}^{\hbar}=\widehat{V}_{Y}\widehat{M}%
_{X^{1/2}}\phi_{0}^{\hbar} \label{fixy}%
\end{equation}
where $\widehat{V}_{Y}$ and $\widehat{M}_{X^{1/2}}=\widehat{M}_{X^{1/2},0}$
are defined by (\ref{mp2}) and (\ref{mp3}). Beware: neither the operators
$\widehat{S}_{X,Y}n$ nor their projections $S_{X,Y}$ \ form a group. This is
due to the fact that even if $X$ and $X^{\prime}$ are symmetric,
$X(X\prime)^{-1}$ is not.)

More generally, we define the displaced Gaussians
\[
\psi_{X,Y}^{z_{0}}=\widehat{T}(z_{0})\psi_{X,Y}\text{ \ \ },\text{ \ \ }%
z_{0}=(x_{0},p_{0})
\]
to which the transformations above are easily extended. We denote by
$\operatorname*{Gauss}(n)$ the set of all \textquotedblleft Gaussian
states\textquotedblright\ $|\psi_{X,Y}^{z_{0}}\rangle$ (\textit{i.e.} the
collection of all functions $c\psi_{X,Y}^{z_{0}}$where $c\in\mathbb{C}$ with
\ $|c|=1$). The subset of $\operatorname*{Gauss}(n)$ consisting of all
$|\psi_{X,Y}\rangle$ is denoted by $\operatorname*{Gauss}_{0}(n)$. The
following result identifies $\operatorname*{Blob}(n$ and
$\operatorname*{Gauss}(n)$.

\begin{theorem}
\label{Prop2}The mapping
\[
\Gamma:\operatorname*{Blob}(n)\longrightarrow\operatorname*{Gauss}(n)
\]
defined by
\[
\Gamma:T(z_{0})S_{X,Y}B^{2n}(\sqrt{\hbar}))\longmapsto|\psi_{X,Y}^{z_{0}%
}\rangle=|\widehat{T}(z_{0})\widehat{S}_{X,Y}\phi_{0}^{\hbar}\rangle
\]
where $S_{X,Y}=V_{-Y}$ $M_{X^{-1/2}}$ and $\widehat{S}_{X,Y}=\widehat{V}%
_{Y}\widehat{M}_{X^{-1/2}}$ is a bijection such that
\[
\Gamma(S^{\prime}Q(S,z_{0})=\Gamma(|\widehat{S^{\prime}}\psi_{X,Y}^{z_{0}%
}\rangle
\]
for all $S,S^{\prime}\in\operatorname*{Sp}(n)$ if $\widehat{S^{\prime}}$
covers $S^{\prime}$.
\end{theorem}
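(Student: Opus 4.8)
The plan is to recognize that $\Gamma$ is, up to the explicit parametrization, nothing but the correspondence sending a Gaussian state to its covariance ellipsoid, together with the fact that this covariance ellipsoid is always a quantum blob. First I would reduce to the centered case: by Proposition \ref{Prop1} every blob can be written uniquely as $T(z_0)V_PM_L(B^{2n}(\sqrt{\hbar}))$ with $L>0$, and since $S_{X,Y}=V_{-Y}M_{X^{-1/2}}$ the identifications $P=-Y$, $L=X^{-1/2}$ put each blob in the form $T(z_0)S_{X,Y}(B^{2n}(\sqrt{\hbar}))$ for a unique triple $(X,Y,z_0)$ with $X>0$, $Y\in\operatorname{Sym}(n,\mathbb{R})$, $z_0\in\mathbb{R}^{2n}$. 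Thus $\Gamma$ is well defined on all of $\operatorname{Blob}(n)$ and is manifestly surjective onto $\operatorname{Gauss}(n)$, since every $|\psi_{X,Y}^{z_0}\rangle$ is the image of the blob carrying these very parameters.

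The bridge between the two sides is the Wigner transform. I would start from the elementary computation $W\phi_0^{\hbar}(z)=(\pi\hbar)^{-n}e^{-|z|^2/\hbar}$, whose covariance matrix is $\tfrac{\hbar}{2}I_{2n}$ and whose covariance ellipsoid $\{z:\tfrac12\Sigma^{-1}z\cdot z\le1\}$ is therefore exactly $B^{2n}(\sqrt{\hbar})$. Applying the symplectic covariance (\ref{Wcov}) to $\psi_{X,Y}=\widehat{S}_{X,Y}\phi_0^{\hbar}$ gives $W\psi_{X,Y}(z)=W\phi_0^{\hbar}(S_{X,Y}^{-1}z)$, so the covariance ellipsoid of $\psi_{X,Y}$ is $S_{X,Y}(B^{2n}(\sqrt{\hbar}))$; composing with $\widehat{T}(z_0)$ translates the Wigner function by $z_0$ and hence produces the blob $T(z_0)S_{X,Y}(B^{2n}(\sqrt{\hbar}))=\Gamma^{-1}(|\psi_{X,Y}^{z_0}\rangle)$. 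Injectivity of $\Gamma$ then follows because the Wigner transform determines a pure state up to a phase, while a Gaussian Wigner function is determined by its center and covariance ellipsoid; recovering $(X,Y,z_0)$ from that data is the uniqueness already supplied by Proposition \ref{Prop1}.

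For the equivariance I would again invoke (\ref{Wcov}) together with (\ref{Tcov}): applying $\widehat{S'}$ to $|\psi_{X,Y}^{z_0}\rangle$ transforms its Wigner function by $z\mapsto (S')^{-1}z$, hence transforms the associated covariance ellipsoid (the blob) by $S'$, which is precisely the assertion $\Gamma(S'Q_{S_{X,Y}}(z_0))=|\widehat{S'}\psi_{X,Y}^{z_0}\rangle$. The step I expect to be the main obstacle is checking that $\widehat{S'}\psi_{X,Y}^{z_0}$ is again a state of the listed form $\psi_{X',Y'}^{z_0'}$, because the operators $\widehat{S}_{X,Y}$ do \emph{not} form a group. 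The resolution is the pre-Iwasawa factorization: writing $S'S_{X,Y}=S_{X',Y'}R$ with $R\in\operatorname{Sp}(n)\cap O(2n)$, one uses that such a symplectic rotation fixes $B^{2n}(\sqrt{\hbar})$ geometrically and fixes $\phi_0^{\hbar}$ up to a unimodular phase under the metaplectic representation (the Fourier case $\widehat{J}\phi_0^{\hbar}=c\phi_0^{\hbar}$ being the prototype). The first fact gives $S'S_{X,Y}(B^{2n}(\sqrt{\hbar}))=S_{X',Y'}(B^{2n}(\sqrt{\hbar}))$, and the second (modulo the harmless sign of the double cover) gives $\widehat{S'}\widehat{S}_{X,Y}\phi_0^{\hbar}=\widehat{S}_{X',Y'}\phi_0^{\hbar}$ up to phase; moving the displacement through by (\ref{Tcov}) sends the center $z_0$ to $S'z_0$ on both sides, so both sides of the equivariance identity land on the same blob $Q_{S_{X',Y'}}(S'z_0)$ and the same class in $\operatorname{Gauss}(n)$, closing the argument.
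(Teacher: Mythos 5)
Your argument is correct, but it takes a genuinely different route from the paper's own proof of Theorem \ref{Prop2}. The paper stays entirely at the level of the metaplectic representation: it reduces to $z_{0}=0$ via the intertwining relations, gets well-definedness and surjectivity from the pre-Iwasawa factorization, and obtains injectivity from the fact that the stabilizer of $\phi_{0}^{\hbar}$ up to phase in $\operatorname*{Mp}(n)$ consists of lifts of symplectic rotations, which fix $B^{2n}(\sqrt{\hbar})$. You instead realize $\Gamma^{-1}$ concretely as the map sending a Gaussian state to the covariance ellipsoid of its Wigner function, starting from $W\phi_{0}^{\hbar}(z)=(\pi\hbar)^{-n}e^{-|z|^{2}/\hbar}$ and using the covariance properties (\ref{Wcov}) and (\ref{Tcov}); this is precisely the ``other way'' the paper itself sketches in the subsequent subsection (culminating in formula (\ref{Wigell})), here promoted to a complete proof. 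Your route buys two things: injectivity becomes essentially automatic (equal rays have equal Wigner functions, hence equal covariance ellipsoids, hence equal blobs --- you do not even need the full strength of the claim that $W\psi$ determines $\psi$ up to phase), and the equivariance assertion, which the paper's proof only gestures at through the intertwining formulas, falls out immediately from (\ref{Wcov}). The cost is the same in both approaches: to see that $\widehat{S^{\prime}}\psi_{X,Y}^{z_{0}}$ is again of the listed Gaussian form one still needs the pre-Iwasawa factorization $S^{\prime}S_{X,Y}=S_{X^{\prime},Y^{\prime}}R$ together with the invariance of $\phi_{0}^{\hbar}$ under metaplectic rotations up to a unimodular constant; you correctly identify this as the delicate step and supply it, so there is no gap.
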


\begin{proof}
It suffices to consider the case $z=0$. and to use the intertwining formulas
\[
\widehat{S}\widehat{T}(z)=\widehat{T}(Sz)\widehat{S}\text{ \ },\text{
\ }ST(z)=T(Sz)S
\]
to deal with the general case. Let us thus show that the restriction
\begin{gather*}
\Gamma_{0}:\operatorname*{Blob}\nolimits_{0}(n)\longrightarrow
\operatorname*{Gauss}\nolimits_{0}(n)\\
S_{X,Y}B^{2n}(\sqrt{\hbar})\longmapsto|\widehat{S}_{X,Y}\phi_{0}^{\hbar
}\rangle
\end{gather*}
is a bijection. Firstly, $\Gamma_{0}$ is a well-defined mapping since every
quantum blob $Q_{S}$ can be written using the pre-Iwasawa factorization as
(\ref{iwa1})) as%
\[
Q_{S}=S_{X,Y}(B^{2n}(\sqrt{\hbar})=V_{Y}M_{X^{-1/2}}(B^{2n}(\sqrt{\hbar}).
\]
Similarly every Gaussian function $\psi_{X,Y}$ can be written as $\psi
_{X,Y}=\widehat{S}_{X,Y}\phi_{0}^{\hbar}$, showing at the same time that
$\Gamma_{0}$ is surjective. To show that $\Gamma_{0}$ is \ bijection there
remains to \ prove injectivity, that is if $\widehat{S}_{X,Y}\phi_{0}^{\hbar
}=\widehat{S}_{X^{\prime},Y^{\prime}}^{\prime}\phi_{0}^{\hbar}$ then
$S_{X,Y}B^{2n}(\sqrt{\hbar}))=S_{X^{\prime},Y^{\prime}}B^{2n}(\sqrt{\hbar}))$.
In view of the rotational symmetry of the standard coherent state $\phi
_{0}^{\hbar}$ we must have $\widehat{S}_{X,Y}=\widehat{S}_{X^{\prime
},Y^{\prime}}\widehat{R}$ where $\widehat{R}\in\operatorname*{Mp}(n)$ \ covers
a symplectic rotation $R\in\operatorname*{Sp}(n)\cap O(2n,\mathbb{R})$, hence
$S_{X^{\prime},Y^{\prime}}=S_{X,Y}R$ and the injectivity follows since
$R(B^{2n}(\sqrt{\hbar})))=B^{2n}(\sqrt{\hbar}))$.
\end{proof}

\subsection{Gaussian quasi states and the Wigner transform}

There is another way,to recover the results above, using the Wigner transform.
We have \cite{Birk,BBirkbis,,WIGNER}
\[
W\phi_{0}^{\hbar}(z)=(\pi\hbar)^{-n}e^{-|z|^{2}/\hbar}%
\]
and hence, by the symplectic covariance of the Wigner transform%
\[
W\psi_{X,Y}(z)=W(\widehat{S}_{X,Y}\phi_{0}^{\hbar})(z)=W\phi_{0}^{\hbar
}(S_{X,Y}^{--1}z)
\]
which yields, since $\widehat{S}_{X,Y}=\widehat{V}_{Y}\widehat{M}_{X^{1/2}}$
has projection $S_{X,Y}=V_{Y}M_{X^{1/2}}$ on $\operatorname*{Sp}(n)$,
\begin{equation}
W\psi_{X,Y}(z)=(\pi\hbar)^{-n}e^{--\frac{1}{\hbar}G_{X,Y}z.z}\text{ }
\label{wpsi}%
\end{equation}
where $G_{X,Y}\in\operatorname*{Sp}(n)$ is the symmetric positive definite
matrix%
\[
G_{X,Y}=(S_{X,Y}S_{X,Y}^{T})^{-1}=%
\begin{pmatrix}
X+YX^{-1}Y & YX^{-1}\\
X^{-1}Y & X^{-1}%
\end{pmatrix}
.
\]
The Wigner transform $W\psi_{X,Y}$ is thus a centered Gaussian probability
distribution with covariance matrix $\Sigma=\frac{\hbar}{2}$ $G_{X,Y}^{-1}$.
The associated covariance ellipsoid is the quantum blob defend by
$G_{X,Y}z\cdot z\leq\hbar$:
\begin{equation}
\Omega_{\Sigma}=\{z:\tfrac{1}{2}\Sigma^{-1}z\cdot z\leq1\}=S_{X,Y}%
(B^{2n}(\sqrt{\hbar}). \label{Wigell}%
\end{equation}
Let us consider, more generally a Gaussian distribution%
\begin{equation}
W_{\widehat{\rho}}(z)=\left(  \tfrac{1}{2\pi}\right)  ^{n}(\det\Sigma
)^{-1/2}e^{-\frac{1}{2}\Sigma^{-1}z\cdot z} \label{wigrho}%
\end{equation}
whose purity is \cite{Birk}
\begin{equation}
\mu(\widehat{\rho})=\operatorname*{Tr}(\widehat{\rho}^{2})=\left(  \frac
{\hbar}{2}\right)  ^{n}(\det\Sigma)^{-1/2}. \label{purity}%
\end{equation}
\ 

\section{The Canonical Group of a Quasi State}

Here we ask \textquotedblleft what is the subgroup of $\operatorname*{Sp}(n)$
leaving an arbitrary (centered) quantum blob invariant? A characteristic
property of quasi-states arising from elliptic set $X_{\ell}$ carried by a
Lagrangian plane $\ell$ is tat they contain a quantum blob, \ to which a
Gaussian state is canonically associated. We now address the question of
whether several different quasi states can give rise to the same Gaussian.
Technically, this amounts to analyzing the following situation:

\subsection{A stationary Schr\"{o}dinger equation for $\psi_{X,Y}$}

The standard Gaussian $\phi_{0}^{\hbar}$ trivially satisfies the partial
differential equation%
\begin{equation}
\widehat{H}_{0}\phi_{0}^{\hbar}=\frac{1}{2}(-\hbar^{2}\nabla_{x}^{2}%
+|x|^{2})\phi_{0}^{\hbar}=\frac{1}{2}n\hbar\phi_{0}^{\hbar} \label{Ho}%
\end{equation}
(it is the stationary Schr\"{o}dinger equation for the ground state of the
isotropic harmonic oscillator with mass $m=1$). More generally we would now
like to find an equation $\widehat{H}_{X,Y}\psi_{X,Y}=\lambda\psi_{X,Y}$
satisfied by $\psi_{X,Y}$. For this purpose it is tempting to use the relation
(\ref{fixy}) earlier established and to rewrite (\ref{Ho}) noting that
$\phi_{0}^{\hbar}=\widehat{S}_{X,Y}^{-1}\psi_{X,Y}$. This leads to the
equation
\[
\widehat{H}_{X,Y}\psi_{X,Y}=\widehat{S}_{X,Y}\widehat{H}_{0}\widehat{S}%
_{X,Y}^{-1}\psi_{X,Y}=\frac{1}{2}n\hbar\psi_{X,Y}%
\]
or, equivalently if we use Weyl quantization \cite{Birk,Birkbis}%
\[
\widehat{H}_{X,Y}\psi_{X,Y}=\widehat{H_{00}\circ S_{X,Y}^{-1}}\psi_{X,Y}%
=\frac{1}{2}n\hbar\psi_{X,Y}%
\]
where $H_{0}(x,p)=\frac{1}{2}(|x|^{2}+|p|^{2})$. \ However, this is just an
equivalent way to rewrite the equation (\ref{Ho}); and does not bring any new
information.. We will rather proceed by making a direct computation (which is
a particular case of \textquotedblleft Fermi's trick\textquotedblright\ we
will discuss in a moment). Consider first the real Gaussian $\psi_{X}%
=\psi_{X,0}$. Diagonalizing $X$ an easy calculation leads to the equation%
\begin{equation}
\widehat{H}_{X,0}\psi_{X}=\frac{1}{2}(-\hbar^{2}\nabla_{x}^{2}+|X^{2}x\cdot
x)\psi_{X0}=\frac{1}{2}\hbar\operatorname*{Tr}(X)\psi_{X} \label{H1}%
\end{equation}
where $\operatorname*{Tr}(X)$ is the trace of the symmetric matrix $X$. \ We
thereafter notice that $\psi_{X,Y}$ is obtained from $\psi_{X}$ by multiplying
it by the complex exponential $V_{Y}=e^{-iYx\cdot x/2\hbar}$ and this amounts
to making the change of gauge $-i\hbar\nabla_{x}\longrightarrow-i\hbar
\nabla_{x}+Yx$ in (\ref{H1}), which leads to the equation \
\begin{equation}
\widehat{H}_{X,Y}\psi_{X,Y}=\frac{1}{2}\hbar\operatorname*{Tr}(X)\psi_{X,Y}
\label{H2}%
\end{equation}
where $\widehat{H}_{X,Y}$ \ is the second order partial differential operator%
\begin{equation}
\widehat{H}_{X,Y}=\frac{1}{2}(-i\hbar\nabla_{x}+Yx)^{2}+X^{2}x\cdot x.
\label{hhat}%
\end{equation}
Notice that$\widehat{H}_{X,Y}$ is the Weyl quantization (or any other
\textquotedblleft reasonable\textquotedblright, \textit{i.e.} \ the
Born--Jordan quantization) of the quadratic polynomial
\begin{equation}
H_{X,Y}(x,p)=\frac{1}{2}\left(  (p+Yx)^{2}+X^{2}x\cdot x\right)  \label{xyh}%
\end{equation}
which we call (for reasons that will become clear later on) the\emph{ Fermi
Hamiltonian of }$\psi_{X,Y}$. The latter can be written in matrix form as
\begin{equation}
H_{X,Y}(z)=\frac{1}{2}M_{X,Y}z\cdot z \label{gf4}%
\end{equation}
where $M_{X,Y}$ is the symmetric and positive-definite matrix%
\begin{equation}
M_{X,Y}=%
\begin{pmatrix}
X^{2}+Y^{2} & Y\\
Y & I
\end{pmatrix}
. \label{mxy}%
\end{equation}
The latter factorizes as
\begin{equation}
M_{X,Y}=(S_{X,Y}^{-1})^{T}D_{X}S_{X,Y}^{-1}\text{ \ },\text{\ \ \ }D_{X}=%
\begin{pmatrix}
X & 0\\
0 & X
\end{pmatrix}
\label{mfs}%
\end{equation}
where $S_{X,Y}\in\operatorname*{Sp}(n)$ is given by%
\begin{equation}
S_{X,Y}=V_{Y}M_{X^{1/2}}=%
\begin{pmatrix}
X^{-1/2} & 0\\
-YX^{-1/2} & X^{-1/2}%
\end{pmatrix}
. \label{ess}%
\end{equation}
Note that
\[
H_{X,Y}\circ S_{X,Y}(z)=\frac{1}{2}D_{X}z\cdot z=\frac{1}{2}Xx\cdot x+\frac
{1}{2}Xp\cdot p.
\]
The flow generated by this the Hamiltonian function $K_{X}=H_{X,Y}\circ
S_{X,Y}$ thus consists of symplectic rotations $R_{t}=e^{tJD_{X}}=e^{tD_{X}J}%
$, explicitly given by
\[
R_{t}=%
\begin{pmatrix}
\cos(tX) & \sin(tX)\\
-\sin(tX) & \cos(tX)
\end{pmatrix}
\in\operatorname*{Sp}(n)\cap O(2n,\mathbb{R}).
\]

The way we obtained the result above is a actually a particular case of a more
general (and simple ) procedure outlined by Enrico Fermi \cite{Fermi} back in
1930 (also see \cite{ben} who rediscovered Fermi's construction). \ We have
studied the Fermi Hamiltonian \cite{DeGoHi} from the point perspective of an
internal energy associated with Bohm's theory of quantum motion.

\subsection{Construction of the canonical group}

Consider now the Hamilton equations for the Fermi Hamiltonian $H_{X,Y}$; they
can be written%

\[
\dot{z}(t)=J\nabla_{z}H_{X,Y}(z(t))=2JM_{X,Y}z(t).
\]
It follows \ from the factorization (\ref{mfs}) and the relation
$J(S_{X,Y}^{-1})^{T}=S_{X,Y}$ (since $S_{X,Y}\in\operatorname*{Sp}(n)$) that
the Hamiltonian flow determined by $H_{X,Y}$ consists of the symplectic
matrices%
\[
S_{t}=e^{2tJM_{X,Y}}=S_{X,Y}e^{2JD_{X}}S_{X,Y}^{-1}.
\]
We will call the one-parameter subgroup $(S_{t})$ of $\operatorname*{Sp}(n)$
the \emph{canonical group of} $\psi_{X,Y}$ (which we introduces in
\cite{jgeom} in another the context). It is clear, by conservation of energy,
that $(S_{t})$ leaves the Fermi ellipsoid $\Omega_{X,Y}$ invariant:
$S_{t}(\Omega_{X,Y})=\Omega_{X,,Y}$ for all $t$. More interesting is the
following result, which makes use of the fact that an one-parameter subgroup
of a connected Lie group can be covered by a unique parameter subgroup of each
of its coverings (see \textit{e.g.} Steenrod \cite{Steenrod}.)

\begin{theorem}
(i) The lift $(\widehat{S}_{t})$ of the canonical group $\ (S_{t})$\ \ to the
metaplectic group $\operatorname*{Mp}(n)$preserves the state $|\psi
_{X,Y}\rangle$ invariant; in fact%
\begin{equation}
\widehat{S}_{t}\psi_{X,Y}=e^{-it\operatorname*{Tr}/X)}\psi_{X,Y}.
\label{stpsi}%
\end{equation}
(ii) The canonical group $(S_{t})$ leaves the quantum blob $S_{X,Y}%
B^{2n}(\sqrt{\hbar})$ invariant:
\begin{equation}
S_{t}(S_{X,Y}B^{2n}(\sqrt{\hbar}))=S_{X,Y}B^{2n}(\sqrt{\hbar}) \label{stblob}%
\end{equation}
for all $t$. \ 
\end{theorem}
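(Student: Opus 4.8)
The plan is to reduce both statements to the conjugation structure of the canonical group already recorded above: from the factorization (\ref{mfs}) and $S_{X,Y}\in\operatorname*{Sp}(n)$ one has
\[
S_{t}=e^{2tJM_{X,Y}}=S_{X,Y}\,R_{2t}\,S_{X,Y}^{-1},\qquad R_{2t}=e^{2tJD_{X}}\in\operatorname*{Sp}(n)\cap O(2n,\mathbb{R}),
\]
where $R_{2t}$ is the symplectic rotation generated by $K_{X}=H_{X,Y}\circ S_{X,Y}=\tfrac{1}{2}(Xx\cdot x+Xp\cdot p)$. Everything then follows by transporting through the fixed matrix $S_{X,Y}$ the corresponding statement about the pure rotation $R_{2t}$.

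Part (ii) is immediate from this identity. Indeed,
\[
S_{t}\bigl(S_{X,Y}B^{2n}(\sqrt{\hbar})\bigr)=S_{X,Y}R_{2t}S_{X,Y}^{-1}S_{X,Y}B^{2n}(\sqrt{\hbar})=S_{X,Y}\,R_{2t}B^{2n}(\sqrt{\hbar}),
\]
and since $R_{2t}$ is a symplectic rotation it fixes the centered ball, $R_{2t}B^{2n}(\sqrt{\hbar})=B^{2n}(\sqrt{\hbar})$ --- exactly the invariance already noted for every element of $\operatorname*{Sp}(n)\cap O(2n,\mathbb{R})$ in the discussion of the pre-Iwasawa factorization. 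Hence the right-hand side equals $S_{X,Y}B^{2n}(\sqrt{\hbar})$, which is (\ref{stblob}).

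For part (i) I would first fix the lift. By the quoted Steenrod fact the one-parameter subgroup $(S_{t})$ admits a \emph{unique} one-parameter lift $(\widehat{S}_{t})$ to $\operatorname*{Mp}(n)$, which removes any metaplectic sign or branch ambiguity. By the metaplectic correspondence $(\widehat{S}_{t})$ is the quantum propagator generated by the Weyl quantization of the Hamiltonian whose flow is $S_{t}=e^{2tJM_{X,Y}}$, that is of $2H_{X,Y}$, so that $\widehat{S}_{t}=e^{-2it\widehat{H}_{X,Y}/\hbar}$. The decisive input is then the eigenvalue equation (\ref{H2}), $\widehat{H}_{X,Y}\psi_{X,Y}=\tfrac{1}{2}\hbar\operatorname*{Tr}(X)\psi_{X,Y}$, already established: applying $\widehat{S}_{t}$ to this eigenstate gives
\[
\widehat{S}_{t}\psi_{X,Y}=e^{-2it\left(\tfrac{1}{2}\operatorname*{Tr}(X)\right)}\psi_{X,Y}=e^{-it\operatorname*{Tr}(X)}\psi_{X,Y},
\]
which is (\ref{stpsi}). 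Equivalently one may conjugate at the metaplectic level: the same uniqueness yields $\widehat{S}_{t}=\widehat{S}_{X,Y}\widehat{R}_{2t}\widehat{S}_{X,Y}^{-1}$, and since $\psi_{X,Y}=\widehat{S}_{X,Y}\phi_{0}^{\hbar}$ the computation collapses to $\widehat{R}_{2t}\phi_{0}^{\hbar}$; because $\phi_{0}^{\hbar}$ is the ground state of the oscillator $\widehat{K}_{X}$ generating $\widehat{R}_{t}$, with energy $\tfrac{1}{2}\hbar\operatorname*{Tr}(X)$ (diagonalize $X$ and use the rotational symmetry of $\phi_{0}^{\hbar}$), the same phase $e^{-it\operatorname*{Tr}(X)}$ emerges.

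The one genuinely delicate point is the bookkeeping of the metaplectic phase: one must check that the sign convention in the propagator and the factor $2$ in $S_{t}=e^{2tJM_{X,Y}}$ conspire to produce exactly $e^{-it\operatorname*{Tr}(X)}$ rather than $e^{\pm it\operatorname*{Tr}(X)/2}$. This is precisely where the unique-lift property is indispensable --- it pins down $(\widehat{S}_{t})$ with no residual ambiguity --- together with the identification of the eigenvalue as $\tfrac{1}{2}\hbar\operatorname*{Tr}(X)$ coming from (\ref{H2}) (or, in the conjugation route, from $\phi_{0}^{\hbar}$ being the \emph{ground} state, not merely an eigenstate, of $\widehat{K}_{X}$). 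Once these are settled the remainder is purely algebraic.
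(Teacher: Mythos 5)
Your proof is correct. For part (i) you follow essentially the same route as the paper: you identify $(\widehat{S}_{t})$ with the quantum propagator generated by $\widehat{H}_{X,Y}$ (pinned down by the unique one-parameter lift) and apply it to the eigenvalue relation (\ref{H2}); the paper phrases this as comparing the metaplectic solution of the time-dependent Schr\"{o}dinger equation (\ref{schr1}) with the solution obtained by separation of variables, which is the same computation. Your explicit concern about the factor of $2$ is well placed and is handled more carefully in your write-up than in the paper's own: with the normalization $S_{t}=e^{2tJM_{X,Y}}$ the propagator is $e^{-2it\widehat{H}_{X,Y}/\hbar}$ and the phase is $e^{-it\operatorname{Tr}(X)}$ as in (\ref{stpsi}), whereas the equation (\ref{schr1}) as written (no factor $2$) would literally yield $e^{-it\operatorname{Tr}(X)/2}$; your insistence on matching the generator of $(S_{t})$ with the Hamiltonian in the propagator is exactly what resolves this. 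For part (ii) you take a genuinely different and more self-contained route: the paper deduces (\ref{stblob}) from part (i) via the bijection $\Gamma_{0}$ of Theorem \ref{Prop2} (the state is unchanged up to a phase, hence so is the corresponding blob), while you argue directly from the conjugation $S_{t}=S_{X,Y}R_{2t}S_{X,Y}^{-1}$ implied by the factorization (\ref{mfs}), together with the invariance of $B^{2n}(\sqrt{\hbar})$ under elements of $\operatorname{Sp}(n)\cap O(2n,\mathbb{R})$. Your argument is purely geometric, independent of part (i) and of the blob--Gaussian correspondence, and makes the mechanism of the invariance transparent; the paper's argument is shorter once Theorem \ref{Prop2} is available, but it routes a linear-algebra fact through the metaplectic representation. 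Both are valid.
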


\begin{proof}
(i) Consider the time-dependent Schr\"{o}dinger equation%
\begin{equation}
i\hbar\frac{\partial\psi}{\partial t}=\widehat{H}_{X,Y}\psi\text{ \ , \ }%
\psi(\cdot,0)=\psi_{X,Y} \label{schr1}%
\end{equation}
and let $(\widehat{S}_{t})$ \ be defined as above. We have, according to the
theory of the Schr\"{o}dinger equation for quadratic Hamiltonians
\cite{ICP,Birk,Birkbis} the unique solution of (\ref{schr1}) is given by
\[
\psi(x,t)=\widehat{S}_{t}\psi_{X,Y}(x).
\]
On the other hand, solving (\ref{schr1}) by the method of separation of
variables immediately yields, taking formula (\ref{H2}) into account%
\[
\psi(x,t)=e^{-it\operatorname*{Tr}/X)}\psi_{X,Y};
\]
comparing both solutions yields (\ref{stpsi}). (ii) Immediately follows from
Theorem \ref{Prop2}: the mapping
\begin{gather*}
\Gamma_{0}:\operatorname*{Blob}\nolimits_{0}(n)\longrightarrow
\operatorname*{Gauss}\nolimits_{0}(n)\\
S_{X,Y}B^{2n}(\sqrt{\hbar})\longmapsto|\psi_{X,Y}\rangle
\end{gather*}
is a a bijection and we have $|\psi_{X,Y}\rangle=|\widehat{S}_{t}\psi
_{X,Y}\rangle$.
\end{proof}

\begin{acknowledgement}
This work has been financed by the QUANTUM\ AUSTRIA grant PAT 2056623 of the
Austrain Research Foundation FWF.
\end{acknowledgement}

\end{document}